\documentclass[11pt, a4paper]{article}

\usepackage{amsmath, amssymb, amsthm, mathtools}
\usepackage{geometry}
\usepackage{hyperref}
\usepackage{xcolor}
\usepackage{booktabs}
\usepackage{newpxtext,newpxmath}

\theoremstyle{plain}
\newtheorem{theorem}{Theorem}[section]

\newtheorem{proposition}[theorem]{Proposition}

\theoremstyle{definition}
\newtheorem{definition}[theorem]{Definition}

\newtheorem{remark}[theorem]{Remark}

\newcommand{\CC}{\mathbb{C}}
\newcommand{\AOm}{\mathcal{A}_\Omega}
\newcommand{\Sh}{\mathrm{Sh}}

\renewcommand{\>}{\rangle}

\newcommand{\calP}{\mathcal{P}}

\title{\textbf{A Factorization Identity for\\
Twisted Multinomial Coefficients\\
with Application to Pilot States in\\
Hamiltonian Decoded Quantum Interferometry}}
\author{Pawe{\l} Wocjan\\[4pt]
\textit{IBM Quantum, T.J.\ Watson Research Center,
Yorktown Heights, NY 10598, USA}}
\date{April 15, 2026}

\begin{document}

\maketitle

\begin{abstract}
The $q$-multinomial coefficient, a classical object in enumerative
combinatorics, counts permutations of multisets weighted by the number
of inversions, with a single deformation parameter~$q$.  We introduce
the \emph{twisted multinomial coefficient}, in which each inversion
between letters $i$ and $j$ carries a pair-dependent weight
$\omega_{ij}$ determined by a skew-symmetric matrix~$\Omega$.
In general, no closed-form evaluation is known.  Our main result is that
under a natural structural condition on~$\Omega$ ---
\emph{predecessor-uniformity} ($\omega_{ij} = q_j$ for all $i<j$)
--- the twisted multinomial factorizes as a product of Gaussian
($q$-deformed) binomials with site-dependent parameters:
$\binom{k}{k_1,\ldots,k_m}_\Omega = \prod_j\binom{\ell_j}{k_j}_{q_j}$
where $\ell_j = k_1+\cdots+k_j$.
This extends the standard product formula for the $q$-multinomial
from a single parameter~$q$ to $m-1$ independent parameters.
The identity is purely combinatorial: it holds for arbitrary
$q_j \in \CC\setminus\{0\}$ without any algebraic constraints.

We were led to this identity by studying pilot state preparation in
Hamiltonian Decoded Quantum Interferometry (HDQI), a recently proposed
quantum algorithm for preparing Gibbs and ground states.  As an
application, we show that the factorization yields an exact matrix
product state (MPS) of bond dimension $k+1$ for the expansion
coefficients of $h^k$ in a twisted algebra.  We further show that the
same site matrices deliver an exact MPS of bond dimension
$\deg(\calP)+1$ for the expansion coefficients of $\calP(h)$, for any
polynomial~$\calP$, via a polynomial-dependent right boundary vector.

We note that this addresses only one component of the HDQI pipeline
(pilot state preparation); the full protocol additionally requires
efficient decoding of the associated Hamiltonian code, and both
components must work in conjunction for Hamiltonians of physical
interest.  Identifying such Hamiltonians remains an important open
problem.
\end{abstract}

\tableofcontents
\bigskip

\section{Introduction}

\subsection{The Combinatorial Problem}

The \emph{$q$-multinomial coefficient}
\begin{align}
  \binom{k}{k_1,\ldots,k_m}_q
  = \frac{[k]_q!}{[k_1]_q! \cdots [k_m]_q!},
  \qquad
  [s]_q = 1 + q + \cdots + q^{s-1},
  \quad
  [s]_q! = [1]_q [2]_q \cdots [s]_q,
\end{align}
is a classical object in enumerative combinatorics: it counts
permutations of the multiset $\{1^{k_1}, 2^{k_2}, \ldots, m^{k_m}\}$
weighted by $q$ raised to the number of
inversions~\cite[Section~1.3]{Stanley2011}
(see also~\cite{FoataHan} for a detailed treatment of inversion
statistics).
It factorizes as a product of Gaussian binomial coefficients,
\begin{align}
  \binom{k}{k_1,\ldots,k_m}_q
  = \prod_{j=1}^m \binom{\ell_j}{k_j}_q,
  \qquad \ell_j = k_1 + \cdots + k_j,
  \label{eq:standard_factorization}
\end{align}
each factor corresponding to the interleaving of letter $j$ into the
already-placed letters $1,\ldots,j{-}1$.  In this classical setting,
every inversion contributes the same weight~$q$, regardless of which pair
of letters is involved.
The Gaussian binomial $\binom{n}{r}_q$ is defined in full generality
in Definition~\ref{def:gaussian_binom}.

In this paper we introduce and study a generalization in which inversions
carry \emph{pair-dependent weights}.  Given a weight matrix
$\Omega = (\omega_{ij})$ with $\omega_{ii}=1$ and
$\omega_{ji} = \omega_{ij}^{-1}$ for $i < j$, the \emph{twisted
multinomial coefficient} is defined as
\begin{align}
  \binom{k}{k_1,\ldots,k_m}_\Omega
  :=
  \sum_{\sigma \in \mathrm{Sh}(k_1,\ldots,k_m)}
  \prod_{\substack{r < s \\ \sigma(r) > \sigma(s)}}
  \omega_{\sigma(s),\sigma(r)},
  \label{eq:twisted_intro}
\end{align}
where $\mathrm{Sh}(k_1,\ldots,k_m)$ denotes the set of all shuffles
(i.e.\ permutations of the multiset $\{1^{k_1},\ldots,m^{k_m}\}$),
the product runs over all inversions (pairs of positions $r < s$ with
$\sigma(r) > \sigma(s)$), and each inversion
contributes $\omega_{\sigma(s),\sigma(r)}$ with
$\sigma(s) < \sigma(r)$.
When $\omega_{ij} = q$ for all $i < j$, this reduces to the standard
$q$-multinomial.

In general, no clean factorization of~\eqref{eq:twisted_intro} is available:
the pair-dependence of the weights couples different stages of the
interleaving process.  Our main result
(Proposition~\ref{prop:factorization}) identifies a structural condition
on~$\Omega$ --- \emph{predecessor-uniformity} --- under which the
factorization~\eqref{eq:standard_factorization} extends to the
multi-parameter setting:
\begin{align}
  \binom{k}{k_1,\ldots,k_m}_\Omega
  = \prod_{j=1}^m \binom{\ell_j}{k_j}_{q_j},
  \label{eq:factorization_intro}
\end{align}
with a \emph{different} deformation parameter $q_j$ for each factor.
Predecessor-uniformity means that $\omega_{ij} = q_j$ for all
$i < j$: each letter~$j$ has a uniform weight with respect to all
earlier letters.  The identity~\eqref{eq:factorization_intro} is
purely combinatorial: it holds for arbitrary
$q_j \in \CC\setminus\{0\}$ and does not require any algebraic
relations.

To our knowledge, this multi-parameter factorization has not appeared
previously in the combinatorics literature; the standard
$q$-multinomial is the special case $q_j = q$ for all $j$.
We also give an $O(m^3)$ greedy algorithm that determines whether a
given weight matrix $\Omega$ admits a predecessor-uniform ordering
(Remark~\ref{rem:algorithm}).

\subsection{Motivation from Quantum Computing}

We were led to the twisted multinomial by studying a problem in quantum
computing: the preparation of \emph{pilot states} in Hamiltonian Decoded
Quantum Interferometry (HDQI), a recently proposed quantum algorithm for
preparing Gibbs states and ground states of many-body
Hamiltonians~\cite{Schmidhuber2025, Bu2026}.

Consider an $n$-qubit Pauli Hamiltonian
$H = \sum_{j=1}^m c_j P_j$ with $P_j \in \{I,X,Y,Z\}^{\otimes n}$.
The commutation phases $P_iP_j = \omega_{ij} P_jP_i$ define a twisting
matrix $\Omega = (\omega_{ij})$ with $\omega_{ij} \in \{+1,-1\}$.
The HDQI protocol requires the expansion of~$H^k$ in the generator basis:
\begin{align}
  H^k = \sum_{r \in \{0,1\}^m} \alpha_r\, P_1^{r_1}\cdots P_m^{r_m},
  \label{eq:expansion_intro}
\end{align}
and the coefficients $\alpha_r$ --- the \emph{pilot state amplitudes} ---
are precisely sums of twisted multinomial coefficients weighted by
products of the Hamiltonian's coefficients~$c_j$
(equation~\eqref{eq:alpha_twisted}).  It suffices to consider the
monomial $\calP(x) = x^k$: a general polynomial of degree at most~$k$
is handled by the \emph{same} site matrices at bond dimension $k+1$
with a polynomial-dependent right boundary vector
(Proposition~\ref{rem:direct_sum}).

The factorization~\eqref{eq:factorization_intro} implies that the
$\alpha_r$ admit an exact \emph{matrix product state} (MPS) of bond
dimension $k+1$ (Theorem~\ref{thm:twisted_mps}), enabling efficient
computation of the pilot state amplitudes.  In contrast, the existing
method of~\cite{Schmidhuber2025,Bu2026} decomposes the computation along
connected components of the \emph{anticommutation graph} (the graph on
$\{P_1,\ldots,P_m\}$ with an edge whenever two generators anticommute)
and has cost exponential in the size of the largest component.  We show
(Section~\ref{sec:comparison}) that predecessor-uniform twistings can
have an anticommutation graph that is a single connected component of
size~$m$, giving an exponential separation \emph{for the pilot state
computation}.

The HDQI protocol involves two further steps beyond pilot state
preparation: syndrome decoding and a coherent Bell
measurement~\cite{Schmidhuber2025,Bu2026}.  In this paper we focus
exclusively on the combinatorial and algebraic aspects of the pilot
state.  The question of which physically interesting Hamiltonians give
rise to predecessor-uniform twistings while also admitting efficient
decoding remains an important open problem.

\medskip
\textbf{Organisation.}
Section~\ref{sec:commutative} treats the commutative baseline.
Section~\ref{sec:twisted} is the heart of the paper: it defines the
twisted multinomial coefficient, introduces predecessor-uniformity,
proves the multi-parameter factorization into Gaussian binomials
(Proposition~\ref{prop:factorization}), derives the MPS
(Theorem~\ref{thm:twisted_mps}), and specializes to Pauli and qudit
settings.
Section~\ref{sec:comparison} compares with the connected-component
method and demonstrates an exponential separation in pilot state
computation cost for mutually anticommuting generators.
Section~\ref{sec:conclusions} discusses limitations and open problems.
Appendix~\ref{app:induction} gives an alternative induction proof of the
twisted multinomial theorem.

\section{The Commutative Setting}
\label{sec:commutative}

Let
\begin{align}
  \mathcal{A}
  =
  \CC[z_1,\dots,z_m]\big/\bigl(z_j^2-1,\; z_iz_j - z_jz_i\bigr).
\end{align}
The generators commute and square to $1$, so $\mathcal{A}$ has dimension $2^m$
with basis $\{z_1^{r_1}\cdots z_m^{r_m} : r_j \in \{0,1\}\}$.  Fix
$c_1,\dots,c_m\in\CC$ and set $h = \sum_{j=1}^m c_j z_j$.  We expand
$\calP(h) = h^k$ and seek an efficient representation of the amplitudes $\alpha_r$
appearing in $h^k = \sum_{r \in \{0,1\}^m} \alpha_r\,z_1^{r_1}\cdots z_m^{r_m}$.
This corresponds to the Pauli Hamiltonian case in which all generators $P_j$
pairwise commute, i.e.\ $\omega_{ij}=1$ for all $i\ne j$.  The coefficients
$\alpha_r$ are the pilot-state amplitudes studied in Section~3 of~\cite{Bu2026}
for commuting Hamiltonians; Theorem~\ref{thm:commutative} recovers their MPS
construction.

\begin{theorem}[Commutative MPS]
  \label{thm:commutative}
  The coefficient tensor $\alpha_r$ admits an exact MPS representation of bond
  dimension $k+1$ and physical dimension $2$.  Define matrices
  $A^{[j]}_{r_j} \in \CC^{(k+1)\times(k+1)}$ by
  \begin{align}
    \bigl(A^{[j]}_{r_j}\bigr)_{\ell_{j-1},\ell_j}
    =
    \begin{cases}
      \dfrac{c_j^{\ell_j-\ell_{j-1}}}{(\ell_j-\ell_{j-1})!}
      & \ell_j \ge \ell_{j-1}
        \text{ and }
        \ell_j - \ell_{j-1} \equiv r_j \pmod{2}, \\[6pt]
      0 & \text{otherwise,}
    \end{cases}
    \label{eq:comm_matrix}
  \end{align}
  and boundary vectors $v_0, v_m \in \CC^{k+1}$ by
  \begin{align}
    v_0 = (1, 0, \ldots, 0),
    \qquad
    v_m = (0, \ldots, 0, 1).
  \end{align}
  Then $\alpha_r = k! \cdot v_0 \bigl(\prod_{j=1}^m A^{[j]}_{r_j}\bigr)v_m^\top$.
\end{theorem}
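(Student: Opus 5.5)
The plan is to compute $\alpha_r$ directly with the multinomial theorem, which applies because the $z_j$ commute, and then to recognize the resulting sum as the contraction of the MPS. Since the generators commute,
\begin{align}
  h^k = \sum_{\substack{k_1+\cdots+k_m = k\\ k_j\ge 0}} \frac{k!}{k_1!\cdots k_m!}\prod_{j=1}^m c_j^{k_j} z_j^{k_j}.
\end{align}
Using $z_j^2=1$, each factor $z_j^{k_j}$ reduces to $z_j^{k_j \bmod 2}$. Because the $z_j$ commute, the monomial can be written in the ordered basis form $z_1^{r_1}\cdots z_m^{r_m}$ with $r_j = k_j \bmod 2$. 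Collecting terms then gives
\begin{align}
  \alpha_r = k!\sum_{\substack{k_1+\cdots+k_m=k\\ k_j\equiv r_j \ (\mathrm{mod}\ 2)}} \prod_{j=1}^m \frac{c_j^{k_j}}{k_j!}.
\end{align}
This uses the fact that $\{z_1^{r_1}\cdots z_m^{r_m}\}$ is a basis of $\mathcal{A}$, as noted above, so the coefficients are uniquely determined.

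Next I expand the matrix product. Writing out the index sum,
\begin{align}
  v_0\Bigl(\prod_{j} A^{[j]}_{r_j}\Bigr)v_m^\top = \sum_{\ell_1,\ldots,\ell_{m-1}} \prod_{j=1}^m \bigl(A^{[j]}_{r_j}\bigr)_{\ell_{j-1},\ell_j},
\end{align}
where the boundary vectors fix $\ell_0=0$ and $\ell_m=k$. The remaining indices range over $\{0,\dots,k\}$.

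I then change variables via $k_j=\ell_j-\ell_{j-1}$. By \eqref{eq:comm_matrix}, a term is nonzero only if each $k_j\ge 0$ and $k_j\equiv r_j$. These nonnegativity constraints force $0=\ell_0\le\ell_1\le\cdots\le\ell_m=k$, so every partial sum automatically lies in $\{0,\dots,k\}$ and is a valid bond index. This is the one point needing care: the map from monotone sequences $(\ell_j)$ to compositions $(k_j)$ of $k$ is a bijection, and no composition is lost to truncation of the bond dimension at $k+1$. On these terms the product of entries equals $\prod_j c_j^{k_j}/k_j!$.

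Multiplying by $k!$ then matches the expression for $\alpha_r$ term by term, which completes the proof. I expect no real obstacle here. The only subtleties are the bijection and range check just described and the convention $0^0=1$ when some $c_j=0$ and $k_j=0$. That convention is consistent on both sides, since it is the same convention used in the multinomial expansion.
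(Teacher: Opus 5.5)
Your proposal is correct and follows essentially the same route as the paper: the multinomial theorem, parity reduction via $z_j^2=1$, reindexing by the partial sums $\ell_j$, and reading off the nearest-neighbour product as the MPS contraction. The paper also passes through the telescoping factorization $\prod_j\binom{\ell_j}{k_j}$, to set up the twisted case, but its final formula is the same $k!\prod_j c_j^{k_j}/k_j!$ you use; your explicit bijection and range check fill in the index bookkeeping the paper leaves implicit.
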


\begin{proof}
By the multinomial theorem,
\begin{align}
  h^k
  =
  \sum_{\substack{k_1,\dots,k_m\ge 0\\k_1+\cdots+k_m=k}}
  \binom{k}{k_1,\dots,k_m}
  \prod_{j=1}^m c_j^{k_j} z_j^{k_j},
  \label{eq:comm_expansion}
\end{align}
where $\binom{k}{k_1,\ldots,k_m} = \frac{k!}{k_1!\cdots k_m!}$ is the
multinomial coefficient.
Since $z_j^2=1$, $z_j^{k_j} = z_j^{k_j\bmod 2}$, giving
\begin{align}
  \alpha_r 
  =
  \sum_{\substack{k_1+\cdots+k_m=k\\k_j\equiv r_j\pmod 2}}
  \binom{k}{k_1,\dots,k_m}
  \prod_{j=1}^m c_j^{k_j}.
\end{align}
Set $\ell_0=0$ and $\ell_j = k_1+\cdots+k_j$ for $j\in\{1,\ldots,m\}$.
The multinomial coefficient factorizes as
\begin{align}
  \binom{k}{k_1,\dots,k_m}
  = \frac{k!}{k_1!\cdots k_m!}
  = \prod_{j=1}^m \binom{\ell_j}{k_j},
  \label{eq:telescoping}
\end{align}
where the equality holds because
$\binom{\ell_j}{k_j} = \frac{\ell_j!}{k_j!\,\ell_{j-1}!}$ and the
product telescopes:
$\prod_j \frac{\ell_j!}{k_j!\,\ell_{j-1}!} = \frac{\ell_m!}{\ell_0!\prod_j k_j!} = \frac{k!}{\prod_j k_j!}$.
This factorization is the key to the MPS structure, and its twisted
generalization (Proposition~\ref{prop:factorization}) is the main result
of Section~\ref{sec:twisted}.
Substituting,
\begin{align}
  \alpha_r
  =
  k!\!
  \sum_{\substack{0=\ell_0\le\cdots\le\ell_m=k\\
      \ell_j-\ell_{j-1}\equiv r_j\pmod 2}}
  \prod_{j=1}^m \frac{c_j^{\ell_j-\ell_{j-1}}}{(\ell_j-\ell_{j-1})!}.
\end{align}
The summand is a product of nearest-neighbour factors in $\ell_j$,
giving the MPS with bond dimension $k+1$.
\end{proof}

A single $\alpha_r$ is computed by the sweep $v_j = v_{j-1} A^{[j]}_{r_j}$
at cost $O(mk^2)$.

\section{The Twisted Setting}
\label{sec:twisted}

\subsection{Twisted Algebra}

\begin{definition}[Single-order twisted algebra]
  \label{def:twisted_alg}
  Let $a \ge 2$ be an integer (the \emph{generator order}) and let
  $\Omega=(\omega_{ij})_{1\le i,j\le m}$ be a matrix of nonzero complex
  numbers satisfying
  \begin{align}
    \omega_{ii} = 1 \quad\text{for all } i,
    \qquad
    \omega_{ji} = \omega_{ij}^{-1} \quad\text{for } i < j,
    \qquad
    \omega_{ij}^a = 1 \quad\text{for all } i,j.
    \label{eq:omega_constraints}
  \end{align}
  The \emph{single-order twisted algebra} (the free associative algebra
  modulo the relations) is the quotient
  \begin{align}
    \AOm = \CC\<z_1,\dots,z_m\>\Big/\bigl(z_j^a-1,\;z_jz_i-\omega_{ij}z_iz_j\bigr).
    \label{eq:twisted_quotient}
  \end{align}
  Here $\CC\<z_1,\dots,z_m\>$ denotes the free associative algebra over $\CC$.
  Imposing the lexicographic ordering $z_1 < \cdots < z_m$, every element
  reduces to a unique linear combination of ordered monomials
  $\{z_1^{r_1}\cdots z_m^{r_m} : r_j\in\{0,1,\ldots,a-1\}\}$, giving an
  $a^m$-dimensional basis.

  The defining relation $z_j z_i = \omega_{ij}\,z_i z_j$ (for $j > i$)
  states that moving a later generator $z_j$ to the left of an earlier
  generator $z_i$ introduces the phase~$\omega_{ij}$.
\end{definition}

\begin{remark}[Consistency of the constraints]
  \label{rem:consistency}
  The third condition in~\eqref{eq:omega_constraints} is a consequence of
  the algebra.  From $z_j z_i = \omega_{ij}\,z_i z_j$ and $z_i^a = 1$,
  we get $z_j = z_j z_i^a = \omega_{ij}^a\, z_i^a z_j = \omega_{ij}^a\, z_j$,
  hence $\omega_{ij}^a = 1$: the commutation phases must be $a$-th
  roots of unity, i.e.\ $\omega_{ij} \in \mu_a := \{e^{2\pi ik/a}
  : k=0,\ldots,a-1\}$.
\end{remark}

\begin{remark}[Combinatorial versus algebraic constraints]
  \label{rem:generality}
  For the application to \emph{qubit} Pauli Hamiltonians, $a=2$ and the
  commutation phase between two $n$-qubit Paulis $P_i$ and $P_j$, encoded as
  $(u_j\mid w_j)\in\mathbb{F}_2^{2n}$ with $P = i^\kappa X^u Z^w$, is
  \begin{align}
    P_iP_j = (-1)^{\lambda_{ij}}P_jP_i,
    \qquad
    \lambda_{ij} = u_i\cdot w_j + w_i\cdot u_j \pmod{2},
    \label{eq:symplectic_comm}
  \end{align}
  so $\omega_{ij} = (-1)^{\lambda_{ij}} \in \mu_2 = \{+1,-1\}$.
  For \emph{qudit} Hamiltonians on systems of prime local dimension~$d$, the
  generalized Pauli (Heisenberg--Weyl) operators satisfy $a=d$ with
  $\omega_{ij} \in \mu_d$
  (see Section~\ref{subsec:qudit_specialization} for the role of primality).

  It is important to distinguish between the \emph{combinatorial} and
  \emph{algebraic} layers of the construction.  The twisted multinomial
  coefficient (Definition~\ref{def:twisted_multinomial}) and its
  factorization (Proposition~\ref{prop:factorization}) are purely
  combinatorial objects: they involve only the weights $\omega_{ij}$ and
  are well-defined for arbitrary
  $\omega_{ij}\in\CC\setminus\{0\}$ satisfying
  $\omega_{ii}=1$ and $\omega_{ji}=\omega_{ij}^{-1}$ for $i < j$,
  without requiring $\omega_{ij}^a=1$ or any algebraic relation
  $z_j^a=1$.  The algebraic constraint $\omega_{ij}^a = 1$ enters
  only when one \emph{interprets} the amplitudes~$\alpha_r$ as
  coefficients of $h^k$ in the algebra~$\AOm$, since the
  reduction $z_j^{k_j} = z_j^{k_j \bmod a}$ requires $z_j^a = 1$.
  We will return to this distinction in
  Remark~\ref{rem:combinatorial}.
\end{remark}

\begin{definition}[Shuffles and twisted multinomial coefficient]
  \label{def:twisted_multinomial}
  Let $k = k_1 + \cdots + k_m$.
  A $(k_1,\dots,k_m)$-\emph{shuffle} is a function
  $\sigma\colon\{1,\ldots,k\}\to\{1,\ldots,m\}$ with
  $|\sigma^{-1}(j)| = k_j$ for each $j$; equivalently, it is a word of
  length~$k$ in the alphabet $\{1,\dots,m\}$ with letter $j$ appearing
  exactly $k_j$ times.  Here the domain $\{1,\ldots,k\}$ indexes
  positions and the range $\{1,\ldots,m\}$ indexes letter types.
  We write $\Sh(k_1,\dots,k_m)$ for the set of all such shuffles.

  An \emph{inversion} is a pair of positions $(r,s)$ with $r<s$ and
  $\sigma(r)>\sigma(s)$; it carries the weight
  $\omega_{\sigma(s),\sigma(r)}$ (note $\sigma(s) < \sigma(r)$).
  This weight arises from the defining relation: reordering
  $z_{\sigma(r)} z_{\sigma(s)} = \omega_{\sigma(s),\sigma(r)}\,
  z_{\sigma(s)} z_{\sigma(r)}$ when $\sigma(r) > \sigma(s)$.
  The \emph{twisted multinomial coefficient} is
  \begin{align}
    \binom{k}{k_1,\dots,k_m}_\Omega
    :=
    \sum_{\sigma\in\Sh(k_1,\dots,k_m)}
    \prod_{\substack{r<s\\\sigma(r)>\sigma(s)}}
    \omega_{\sigma(s),\sigma(r)}.
    \label{eq:twisted_multinomial}
  \end{align}
\end{definition}

The following theorem shows that the twisted multinomial coefficient
plays the same role in the twisted algebra that the ordinary multinomial
coefficient plays in the commutative setting.  The expansion formula
appears implicitly in~\cite{Schmidhuber2025}; we state it here as a
standalone result for completeness and because the twisted multinomial
coefficient itself is the combinatorial object of independent interest.

\begin{theorem}[Twisted multinomial theorem]
  \label{thm:twisted_multinomial}
  Let $\AOm$ be a single-order twisted algebra and set
  $h = \sum_{j=1}^m c_j z_j$ with $c_j\in\CC$.  Then for every integer
  $k \ge 1$,
  \begin{align}
    h^k
    = \sum_{k_1+\cdots+k_m=k}
    \binom{k}{k_1,\dots,k_m}_\Omega \prod_{j=1}^m c_j^{k_j}\,
    z_1^{k_1}\cdots z_m^{k_m}.
    \label{eq:twisted_expansion}
  \end{align}
\end{theorem}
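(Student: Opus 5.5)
We expand $h^k$ as a sum over words and then normal-order each word. By noncommutative distributivity,
\begin{align}
  h^k = \sum_{w=(w_1,\ldots,w_k)\in\{1,\ldots,m\}^k} c_{w_1}\cdots c_{w_k}\, z_{w_1}z_{w_2}\cdots z_{w_k}.
\end{align}
We group the words by content, i.e.\ by the multiplicities $k_j = |\{t : w_t = j\}|$. The words with content $(k_1,\ldots,k_m)$ are exactly the shuffles $\sigma\in\Sh(k_1,\ldots,k_m)$ of Definition~\ref{def:twisted_multinomial}, with $\sigma(t)=w_t$. Every such word carries the same scalar $\prod_j c_j^{k_j}$, which can be pulled out. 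It therefore suffices to prove the normal-ordering lemma: for every shuffle $\sigma$,
\begin{align}
  z_{\sigma(1)}\cdots z_{\sigma(k)} = \Bigl(\prod_{\substack{r<s\\ \sigma(r)>\sigma(s)}} \omega_{\sigma(s),\sigma(r)}\Bigr)\, z_1^{k_1}\cdots z_m^{k_m}
\end{align}
in $\AOm$. We deliberately do not reduce $z_j^{k_j}$ modulo $z_j^a=1$. The right-hand side of \eqref{eq:twisted_expansion} is stated with $z_j^{k_j}$, so no such reduction is needed.

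We prove the lemma by induction on the number of inversions $\mathrm{inv}(\sigma)$, sorting the word with adjacent transpositions as in bubble sort. If $\mathrm{inv}(\sigma)=0$, the word is weakly increasing and equals $z_1^{k_1}\cdots z_m^{k_m}$ with empty product $1$. Otherwise there is an adjacent descent $\sigma(t)>\sigma(t+1)$. Set $i=\sigma(t+1)<j=\sigma(t)$. The defining relation $z_jz_i=\omega_{ij}z_iz_j$ gives $z_{\sigma(1)}\cdots z_{\sigma(k)} = \omega_{ij}\, z_{\sigma'(1)}\cdots z_{\sigma'(k)}$, where $\sigma'=\sigma\circ(t\ t{+}1)$.

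Next we compare the inversion sets of $\sigma$ and $\sigma'$. Swapping adjacent positions $t,t+1$ leaves all pairs not involving both positions in bijection with the same pair of letters. For example, a pair $(r,t)$ with $r<t$ corresponds to $(r,t+1)$ in $\sigma'$ with the same letters and the same relative order. The only change is that the inversion $(t,t+1)$, with weight $\omega_{\sigma(t+1),\sigma(t)}=\omega_{ij}$, is removed. Hence the inversion-weight product of $\sigma$ equals $\omega_{ij}$ times that of $\sigma'$, and $\mathrm{inv}(\sigma')=\mathrm{inv}(\sigma)-1$. The induction hypothesis applied to $\sigma'$ closes the step.

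Summing the lemma over $\sigma\in\Sh(k_1,\ldots,k_m)$ yields exactly $\binom{k}{k_1,\ldots,k_m}_\Omega\prod_j c_j^{k_j}\,z_1^{k_1}\cdots z_m^{k_m}$, which is \eqref{eq:twisted_expansion}.

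The main obstacle is the bookkeeping that the pairwise bijection of inversions under an adjacent swap preserves weights. The point is that the weight of an inversion depends only on the pair of letters involved, not on the positions, so the correspondence is straightforward once stated carefully. We only ever use the relation $z_jz_i=\omega_{ij}z_iz_j$ for $i<j$, together with the convention $\omega_{ii}=1$; equal adjacent letters are never swapped, since they do not form a descent. Consequently the argument never uses $\omega_{ij}^a=1$, consistent with Remark~\ref{rem:generality}.
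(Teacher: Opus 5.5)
Your proof is correct and follows the same route as the paper's main proof: distribute $h^k$ into words, group the words by content into shuffles, and normal-order each word to pick up the product of its inversion weights. The only difference is how the normal-ordering step is justified. The paper appeals to confluence of the defining relations, whereas your induction on the number of inversions establishes the phase directly: each adjacent descent-swap removes exactly one inversion, of weight $\omega_{ij}$, and keeps every other inversion with its weight. Since every rewrite is then an equality in the quotient, the confluence appeal becomes unnecessary, much as with the paper's alternative induction on $k$ in the appendix.
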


\begin{proof}
  Expand $h^k = \bigl(\sum_{j=1}^m c_j z_j\bigr)^k$ by distributing the
  product.  Each term in the expansion corresponds to a choice of one
  generator from each of the $k$ factors, yielding a word
  $z_{\sigma(1)} z_{\sigma(2)} \cdots z_{\sigma(k)}$ for some function
  $\sigma\colon\{1,\ldots,k\}\to\{1,\ldots,m\}$, with coefficient
  $c_{\sigma(1)} c_{\sigma(2)} \cdots c_{\sigma(k)}$.  We group these
  $m^k$ terms by the composition $(k_1,\ldots,k_m)$ with
  $k_j = |\sigma^{-1}(j)|$.  Within each group, the coefficient is
  $\prod_j c_j^{k_j}$ and the monomial is the disordered product
  $z_{\sigma(1)} \cdots z_{\sigma(k)}$, which we must rewrite in canonical
  form $z_1^{k_1}\cdots z_m^{k_m}$.

  To do so, apply pairwise transpositions using the relation
  $z_{\sigma(r)} z_{\sigma(s)} = \omega_{\sigma(s),\sigma(r)}\,
  z_{\sigma(s)} z_{\sigma(r)}$ whenever $\sigma(r) > \sigma(s)$ and
  $r < s$, i.e.\ whenever positions $(r,s)$ form an inversion.  Each
  such transposition contributes the phase $\omega_{\sigma(s),\sigma(r)}$.
  Since $\AOm$ is a well-defined quotient algebra (Definition~\ref{def:twisted_alg}),
  the defining relations are confluent: the canonical form and the total
  accumulated phase are independent of the order in which the transpositions
  are performed.  The total phase for the word~$\sigma$ is therefore
  \begin{align}
    \prod_{\substack{r<s\\\sigma(r)>\sigma(s)}}
    \omega_{\sigma(s),\sigma(r)},
  \end{align}
  which is precisely the inversion weight from
  Definition~\ref{def:twisted_multinomial}.  Summing over all words with
  composition $(k_1,\ldots,k_m)$ --- that is, over all shuffles
  $\sigma \in \Sh(k_1,\ldots,k_m)$ --- gives the twisted multinomial
  coefficient as the coefficient of
  $\prod_j c_j^{k_j}\,z_1^{k_1}\cdots z_m^{k_m}$.
  An alternative proof by induction on $k$ is given in
  Appendix~\ref{app:induction}.
\end{proof}

Reducing $z_j^{k_j}=z_j^{r_j}$ with $r_j=k_j\bmod a$:
\begin{align}
  \alpha_r
  = \sum_{\substack{k_1+\cdots+k_m=k\\k_j\equiv r_j\pmod a}}
  \binom{k}{k_1,\dots,k_m}_\Omega \prod_{j=1}^m c_j^{k_j},
  \qquad r \in \{0,1,\ldots,a-1\}^m.
  \label{eq:alpha_twisted}
\end{align}

\subsection{Predecessor-Uniform Twisting and Gaussian Binomial Factorization}

\begin{definition}[Predecessor-uniform twisting]
  \label{def:predecessor_uniform}
  A twisting matrix $\Omega$ satisfying~\eqref{eq:omega_constraints} is
  \emph{predecessor-uniform} if for each $j \in \{2,\ldots,m\}$ there exists
  $q_j \in \CC\setminus\{0\}$ such that $\omega_{ij}=q_j$ for all
  $i<j$.  (We set $q_1 = 1$ by convention, since the first generator has no
  predecessors.)
\end{definition}

\begin{remark}[Terminology and ordering dependence]
  \label{rem:ordering}
  The name ``predecessor-uniform'' reflects the defining property:
  each generator~$z_j$ has a uniform commutation parameter with all its
  predecessors $z_1,\ldots,z_{j-1}$ in the chosen ordering.  In the
  qubit case ($a=2$), $z_j$ either commutes with all predecessors
  ($q_j=+1$) or anticommutes with all of them ($q_j=-1$);
  for $a>2$, the generalized commutation phase $q_j$ must be the
  same $a$-th root of unity with respect to every predecessor.

  Predecessor-uniformity depends on the chosen ordering
  $z_1 < \cdots < z_m$ of the generators: a twisted algebra that is not
  predecessor-uniform under one ordering may become so under another.
  Given $\Omega$, one should seek a permutation $\pi\in S_m$ such that
  $\omega_{\pi(i),\pi(j)}$ depends only on $\pi(j)$ for all
  $\pi(i)<\pi(j)$.  Not all twisting matrices admit such a permutation
  (see Remark~\ref{rem:failure}).
\end{remark}

\begin{remark}[Efficient recognition of predecessor-uniform twistings]
  \label{rem:algorithm}
  Whether a twisting matrix $\Omega$ admits a predecessor-uniform ordering
  can be decided in $O(m^3)$ time by the following greedy peeling
  algorithm.  Build the permutation from the \emph{last} position
  backwards: at each step, find a generator $j$ in the remaining set~$S$
  whose commutation parameter $\omega_{ij}$ is constant over all
  $i\in S\setminus\{j\}$; place $j$ last among $S$ and remove it.
  If no such $j$ exists, no predecessor-uniform ordering is possible.

  \textit{Correctness.}  If a predecessor-uniform ordering $\pi$ of a
  set~$S$ exists, then restricting $\pi$ to any subset $S'\subseteq S$
  (preserving relative order) is again predecessor-uniform, because
  restriction only shrinks predecessor sets and a constant function
  remains constant on subsets.  In particular, the last element of~$\pi$
  is uniform with respect to all of $S\setminus\{\pi(|S|)\}$, so the
  greedy algorithm always finds a valid candidate.  After removing it,
  the restriction gives a valid ordering of the remainder.  By induction,
  the algorithm produces a predecessor-uniform ordering whenever one
  exists.

  \textit{Complexity.}  Each of the $m$ iterations considers at most
  $|S|$ candidates, each requiring $O(|S|)$ uniformity checks, giving
  $O(m^3)$ total.
\end{remark}

\begin{remark}[Failure for general $\Omega$]
  \label{rem:failure}
  Not all twisting matrices admit a predecessor-uniform ordering.
  For $m=3$ generators with $a=2$, \emph{every} twisting matrix
  admits a predecessor-uniform ordering: by the pigeonhole principle,
  among $\omega_{1j},\omega_{2j},\omega_{3j}\in\{+1,-1\}$ for any
  fixed $j$, at least two of the three generators share the same
  commutation sign with $j$, and a suitable ordering can always be found.

  \textit{Counterexample ($m=4$, $a=2$).}  Take $m=4$ generators with
  $\omega_{12}=-1$, $\omega_{13}=+1$, $\omega_{14}=-1$,
  $\omega_{23}=-1$, $\omega_{24}=+1$, $\omega_{34}=-1$.
  In this configuration, each generator commutes with exactly one of the
  other three and anticommutes with the remaining two.  For any
  generator~$j$, its commutation signs with the other three form a
  $\{+1,-1,-1\}$ or $\{-1,+1,+1\}$ pattern (never all equal), so no
  generator can be placed last in a predecessor-uniform ordering.  The
  greedy peeling algorithm (Remark~\ref{rem:algorithm}) immediately
  fails, and one can verify by exhaustive search that none of the $24$
  orderings satisfies Definition~\ref{def:predecessor_uniform}.

  A concrete realization on $n=2$ qubits is $P_1 = I\otimes X$,
  $P_2 = I\otimes Y$, $P_3 = X\otimes X$, $P_4 = X\otimes Y$;
  one verifies that the anticommutation graph is the $4$-cycle
  $P_1\text{--}P_2\text{--}P_3\text{--}P_4\text{--}P_1$ with
  commuting diagonals $(P_1,P_3)$ and $(P_2,P_4)$.  Two qubits are
  necessary: on a single qubit only three non-identity Paulis exist.
\end{remark}

The factorization of the twisted multinomial under predecessor-uniformity
involves Gaussian ($q$-deformed) binomial coefficients, which we now
define.

\begin{definition}[Gaussian binomial]
  \label{def:gaussian_binom}
  For $q \in \CC$ and integers $n \ge r \ge 0$, the \emph{Gaussian
  binomial coefficient} (or $q$-binomial coefficient) is defined as
  \begin{align}
    \binom{n}{r}_q
    := \sum_{\lambda \subseteq r \times (n-r)} q^{|\lambda|}.
    \label{eq:gauss_binom_partition}
  \end{align}
  Here the sum runs over all \emph{integer partitions} $\lambda$ that fit
  inside an $r \times (n-r)$ rectangular box.  A partition
  $\lambda = (\lambda_1 \ge \lambda_2 \ge \cdots \ge \lambda_r \ge 0)$
  fits in this box if it has at most $r$ parts, each of size at most $n-r$,
  i.e.\ $\lambda_1 \le n-r$.  Its \emph{size} is
  $|\lambda| = \lambda_1 + \cdots + \lambda_r$.

  \smallskip
  \textit{Examples.}
  $\binom{2}{1}_q = 1 + q$ (partitions fitting in a $1\times 1$ box:
  $\lambda=\varnothing$ with $|\lambda|=0$ and $\lambda=(1)$ with $|\lambda|=1$).
  $\binom{3}{2}_q = 1 + q + q^2$ (partitions fitting in a $2\times 1$ box:
  $\lambda = \varnothing, (1), (1{,}1)$ with sizes $0,1,2$).
  $\binom{4}{2}_q = 1 + q + 2q^2 + q^3 + q^4$ (partitions fitting in a
  $2\times 2$ box: $\lambda = \varnothing, (1), (2), (1{,}1), (2{,}1), (2{,}2)$
  with sizes $0,1,2,2,3,4$).

  \smallskip
  The expression~\eqref{eq:gauss_binom_partition} is manifestly a polynomial
  in $q$ with non-negative integer coefficients, and is therefore well-defined
  for all $q \in \CC$, including roots of unity.

  For $q$ not a root of unity, the Gaussian binomial admits the equivalent
  representations
  \begin{align}
    \binom{n}{r}_q
    = \frac{[n]_q!}{[r]_q!\,[n-r]_q!}
    = \prod_{i=0}^{r-1}\frac{1-q^{n-i}}{1-q^{i+1}},
    \label{eq:gauss_binom_product}
  \end{align}
  where $[s]_q = 1+q+\cdots+q^{s-1}$ is the $q$-integer and
  $[s]_q! = [1]_q [2]_q \cdots [s]_q$.  At roots of unity these product
  formulas can involve $0/0$ indeterminate forms, so we always take
  \eqref{eq:gauss_binom_partition} as the primary definition.

  The Gaussian binomial satisfies the $q$-Pascal recurrence
  \begin{align}
    \binom{n}{r}_q = \binom{n-1}{r-1}_q + q^r\binom{n-1}{r}_q,
    \label{eq:q_pascal}
  \end{align}
  with boundary values $\binom{n}{0}_q = 1$ for all $n \ge 0$
  and $\binom{n}{r}_q = 0$ for $r > n \ge 0$.
  These are the same conventions as for the ordinary binomial
  coefficients: $\binom{n}{0} = 1$ and $\binom{n}{r} = 0$ when $r > n$.
  In particular, $\binom{0}{0}_q = 1$ and $\binom{0}{r}_q = 0$ for
  $r \ge 1$.
  The recurrence~\eqref{eq:q_pascal} is well-defined for all $q$ and
  computes $\binom{n}{r}_q$ in $O(nr)$ operations, without the need for
  the product formula~\eqref{eq:gauss_binom_product}.
\end{definition}

We can now state the main factorization result.  In the
single-parameter case ($q_j = q$ for all $j$), the factorization
below reduces to the well-known product formula for the $q$-multinomial
coefficient; the combinatorial background can be found in Stanley's
treatment of permutations of multisets~\cite[Section~1.3, ``Permutations
of Multisets'', pp.~25--30]{Stanley2011}.  Our contribution is the
extension from a single parameter~$q$ to site-dependent
parameters~$q_j$, which is precisely what predecessor-uniformity
provides.

\begin{proposition}[Factorization of the twisted multinomial]
  \label{prop:factorization}
  Under predecessor-uniform twisting with parameters $q_j$,
  \begin{align}
    \binom{k}{k_1,\dots,k_m}_\Omega
    =
    \prod_{j=1}^m \binom{\ell_j}{k_j}_{q_j},
    \label{eq:factorization}
  \end{align}
  where $\ell_j=k_1+\cdots+k_j$ and $\binom{n}{r}_q$ is the Gaussian
  binomial (Definition~\ref{def:gaussian_binom}).
\end{proposition}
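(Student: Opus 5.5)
We argue by a bijective decomposition of shuffles that mirrors the classical interleaving proof of \eqref{eq:standard_factorization}, keeping track of which letter each inversion is charged to. A shuffle $\sigma\in\Sh(k_1,\dots,k_m)$ is determined by the tuple $(\tau_2,\dots,\tau_m)$ defined as follows. For each $j$, let $w^{(j)}$ be the subword of $\sigma$ consisting of the letters in $\{1,\dots,j\}$; it has length $\ell_j$. Let $\tau_j$ be the binary word of length $\ell_j$ marking which positions of $w^{(j)}$ carry letter $j$ (value $1$) and which carry a letter $<j$ (value $0$); it contains $k_j$ ones. Conversely, given binary words $\tau_j$ of length $\ell_j$ with $k_j$ ones, we rebuild $w^{(1)}=1^{k_1}$, and obtain $w^{(j)}$ from $w^{(j-1)}$ by filling the $0$-slots of $\tau_j$ with $w^{(j-1)}$ in order and the $1$-slots with letter $j$. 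Then $\sigma=w^{(m)}$. This map is a bijection $\Sh(k_1,\dots,k_m)\to\prod_{j}\{\text{binary words of length }\ell_j\text{ with }k_j\text{ ones}\}$, where the factor for $j=1$ is a singleton.

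Next we split the weight. Each inversion $(r,s)$ of $\sigma$ has $\sigma(r)=j>\sigma(s)=i$, and we charge it to the larger letter $j$. By predecessor-uniformity its weight is $\omega_{ij}=q_j$, independent of $i$. The inversions charged to $j$ are exactly the pairs consisting of an occurrence of $j$ followed by an occurrence of a smaller letter. Restricting to the subword $w^{(j)}$ preserves relative order, so these are precisely the pairs (a $1$ before a $0$) in $\tau_j$, i.e.\ the inversions $\mathrm{inv}(\tau_j)$ of the binary word. Letters larger than $j$ play no role in this count. Hence the weight of $\sigma$ equals $\prod_{j=2}^m q_j^{\mathrm{inv}(\tau_j)}$. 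Summing over the product set then factorizes the total as
\begin{align}
\binom{k}{k_1,\dots,k_m}_\Omega=\prod_{j=1}^m\ \sum_{\tau_j} q_j^{\mathrm{inv}(\tau_j)} .
\end{align}

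The remaining step is the standard identity $\sum_{\tau}q^{\mathrm{inv}(\tau)}=\binom{n}{r}_q$, where $\tau$ ranges over binary words of length $n$ with $r$ ones. We prove it directly from Definition~\ref{def:gaussian_binom}. Associate to $\tau$ the partition $\lambda$ whose parts are, for each $1$, the number of $0$'s to its right. These parts are weakly decreasing from left to right, there are $r$ of them, and each is at most $n-r$. This is a bijection onto the partitions in the $r\times(n-r)$ box, with $|\lambda|=\mathrm{inv}(\tau)$. Alternatively, one can verify the $q$-Pascal recurrence \eqref{eq:q_pascal} by conditioning on the first letter. For $j=1$ the sum is $\binom{k_1}{k_1}_{q_1}=1$, consistent with $q_1=1$. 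Because everything is a polynomial identity in the $q_j$, no root-of-unity issues arise.

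The main obstacle is the weight-splitting step: we must make sure that each inversion is counted exactly once, against the correct letter, and that the binary-word inversions of $\tau_j$ coincide with the inversions of $\sigma$ between $j$ and smaller letters. This is exactly where predecessor-uniformity is needed. Without it, the weight of a pair depends on both letters, and the sum would not separate into a product over $j$.
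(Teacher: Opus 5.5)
Your proof is correct and follows essentially the same route as the paper. The paper's stage-by-stage insertion of the $k_j$ copies of letter $j$ into the word on letters $<j$ is your binary word $\tau_j$, and its partition $(t_1,\ldots,t_{k_j})$ of ``existing letters to the right'' is your partition counting the zeros to the right of each $1$. Your version is more explicit: you give the bijection onto the product of binary-word sets and charge each inversion to its larger letter, which makes precise the paper's Step 4 claim that the stages contribute independently.
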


\begin{proof}
  We prove the identity by building each shuffle as a sequence of
  interleavings and showing that the total phase decomposes into
  independent contributions, one per stage.

  \textit{Step 1: Sequential construction of shuffles.}  Every
  $(k_1,\dots,k_m)$-shuffle can be constructed in $m$ stages.  At
  stage~$j = 1,\ldots,m$, insert $k_j$ copies of letter $j$ into the word
  of length $\ell_{j-1} = k_1+\cdots+k_{j-1}$ produced by the preceding
  stages.  After stage~$j$, the word has length $\ell_j = \ell_{j-1}+k_j$.
  Every shuffle arises exactly once from this procedure, and the inversions
  in the final word decompose into those created at each stage.

  \textit{Step 2: Phase per insertion (predecessor-uniformity).}  At
  stage~$j$, a single copy of $z_j$ is placed into the existing word.
  Suppose it lands with $t$ existing letters to its right.  Each of these
  $t$ letters carries some $z_i$ with $i < j$, creating an inversion at
  position pair $(z_j, z_i)$.  The associated phase is
  $\omega_{ij}$ (the weight for the pair $i < j$).  By
  predecessor-uniformity, $\omega_{ij} = q_j$ for every $i<j$, so
  the phase is $q_j$ regardless of which predecessor $z_i$ is involved.
  The $t$ inversions therefore contribute a combined phase of $q_j^t$.

  This is the crucial step: without predecessor-uniformity, the phase would
  depend on the identities $z_{i_1},\ldots,z_{i_t}$ of the letters to the
  right, not just their count~$t$, and the factorization would fail (cf.\
  Remark~\ref{rem:failure}).

  \textit{Step 3: Interleaving sum at stage $j$.}  At stage $j$, we
  insert $k_j$ copies of $z_j$ into a word of length $\ell_{j-1}$.  The
  $u$-th copy ($u=1,\ldots,k_j$) lands with $t_u$ existing letters to its
  right.  Since the copies of $z_j$ are identical, we may assume they are
  inserted from right to left without loss of generality, giving a weakly
  decreasing sequence
  $\ell_{j-1} \ge t_1 \ge t_2 \ge \cdots \ge t_{k_j} \ge 0$.
  Equivalently, the $k_j$ insertion positions, read in non-increasing order,
  form a partition $\lambda = (t_1,\ldots,t_{k_j})$ fitting inside a
  $k_j \times \ell_{j-1}$ box.

  From Step~2, the total phase contributed at stage~$j$ is
  $q_j^{t_1+t_2+\cdots+t_{k_j}} = q_j^{|\lambda|}$.
  Summing over all valid partitions:
  \begin{align}
    \sum_{\lambda \subseteq k_j \times \ell_{j-1}}
    q_j^{\,|\lambda|}
    = \binom{\ell_{j-1}+k_j}{k_j}_{q_j}
    = \binom{\ell_j}{k_j}_{q_j},
    \label{eq:interleaving_sum}
  \end{align}
  which is precisely the Gaussian binomial
  (Definition~\ref{def:gaussian_binom}), since the sum runs over
  partitions with at most $k_j$ parts each of size at most~$\ell_{j-1}$,
  weighted by $q_j^{|\lambda|}$.

  \textit{Step 4: Independence across stages.}  The total phase of a
  shuffle is the product of phases from each stage.  Crucially, the
  interleaving sum at stage $j$ (equation~\eqref{eq:interleaving_sum})
  depends on the preceding stages only through the total count
  $\ell_{j-1} = k_1+\cdots+k_{j-1}$, not on the identities of the
  $\ell_{j-1}$ letters already placed.  This is a direct consequence of
  predecessor-uniformity (Step~2): all predecessors contribute the same
  phase $q_j$, so only their count matters.  Therefore the contributions at
  different stages are independent and multiply:
  \begin{align}
    \binom{k}{k_1,\dots,k_m}_\Omega
    = \prod_{j=1}^m \binom{\ell_j}{k_j}_{q_j}.
    \tag*{\qedhere}
  \end{align}
\end{proof}

\begin{remark}[Commutative limit]
  \label{rem:commutative_limit}
  At $q_j=1$ for all $j$: $\binom{\ell_j}{k_j}_1 = \binom{\ell_j}{k_j}$, and
  $\prod_j\binom{\ell_j}{k_j}$ telescopes to $k!/(k_1!\cdots k_m!)$
  as in \eqref{eq:telescoping},
  recovering the commutative factorization.
\end{remark}

\begin{remark}[Generality of the factorization]
  \label{rem:combinatorial}
  As noted in Remark~\ref{rem:generality},
  Proposition~\ref{prop:factorization} belongs to the combinatorial
  layer: it holds for all $q_j \in \CC\setminus\{0\}$ with no
  requirement that $q_j^a = 1$ or that any algebra $\AOm$ exists.
  The algebraic constraint $q_j \in \mu_a$ is needed only when one
  wishes to interpret the resulting amplitudes~$\alpha_r$ as expansion
  coefficients in a twisted algebra with $z_j^a = 1$.
\end{remark}

\subsection{MPS Representation}

\begin{theorem}[MPS for single-order predecessor-uniform twisted algebras]
  \label{thm:twisted_mps}
  Let $\AOm$ be a single-order twisted algebra with generator order $a \ge 2$
  and predecessor-uniform twisting with parameters $q_j$.
  The physical indices take values $r_j \in \{0,1,\ldots,a-1\}$.  Define
  $\widetilde{A}^{[j]}_{r_j}\in\CC^{(k+1)\times(k+1)}$ by
  \begin{align}
    \bigl(\widetilde{A}^{[j]}_{r_j}\bigr)_{\ell_{j-1},\ell_j}
    =
    \begin{cases}
      c_j^{\,\ell_j-\ell_{j-1}}
      \dbinom{\ell_j}{\ell_j-\ell_{j-1}}_{q_j}
      & \ell_j\ge\ell_{j-1},\;
        \ell_j-\ell_{j-1}\equiv r_j\pmod{a}, \\[8pt]
      0 & \text{otherwise.}
    \end{cases}
    \label{eq:twisted_matrix}
  \end{align}
  Then
  \begin{align}
    \alpha_r
    = 
    v_0
    \Bigl(\prod_{j=1}^m \widetilde{A}^{[j]}_{r_j}\Bigr)
    v_m^\top,
    \label{eq:mps_twisted}
  \end{align}
  where the boundary vectors $v_0 = (1,0,\ldots,0)$ and
  $v_m = (0,\ldots,0,1)$ are as in the commutative case.
\end{theorem}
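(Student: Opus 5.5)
The plan is to mirror the proof of Theorem~\ref{thm:commutative}, with Proposition~\ref{prop:factorization} in place of the telescoping identity~\eqref{eq:telescoping}. We start from the amplitude formula~\eqref{eq:alpha_twisted}, which follows from Theorem~\ref{thm:twisted_multinomial} by reducing $z_j^{k_j}=z_j^{k_j \bmod a}$. This expresses $\alpha_r$ as a sum over compositions $k_1+\cdots+k_m=k$ with $k_j\equiv r_j \pmod a$, weighted by the twisted multinomial coefficient times $\prod_j c_j^{k_j}$.

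Next we substitute the factorization~\eqref{eq:factorization}, so each summand becomes $\prod_{j=1}^m c_j^{k_j}\binom{\ell_j}{k_j}_{q_j}$. We then change variables from $(k_1,\ldots,k_m)$ to partial sums $0=\ell_0\le\ell_1\le\cdots\le\ell_m=k$ via $k_j=\ell_j-\ell_{j-1}$. This map is a bijection between compositions of $k$ into $m$ nonnegative parts and weakly increasing chains in $\{0,\ldots,k\}$ from $0$ to $k$, and the congruence $k_j\equiv r_j$ becomes $\ell_j-\ell_{j-1}\equiv r_j \pmod a$. The summand then reads
\begin{align}
\prod_{j=1}^m c_j^{\ell_j-\ell_{j-1}}\binom{\ell_j}{\ell_j-\ell_{j-1}}_{q_j},
\end{align}
which is a product of factors each depending only on the neighbouring pair $(\ell_{j-1},\ell_j)$. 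Each factor is exactly the entry $(\widetilde A^{[j]}_{r_j})_{\ell_{j-1},\ell_j}$ from~\eqref{eq:twisted_matrix}.

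Finally we read off the matrix product. Indexing rows and columns by $0,\ldots,k$, the quantity $v_0(\prod_j \widetilde A^{[j]}_{r_j})v_m^\top$ equals the sum over all chains $(\ell_0,\ldots,\ell_m)\in\{0,\ldots,k\}^{m+1}$ with $\ell_0=0$ and $\ell_m=k$ of the product of entries. The zero pattern of $\widetilde A^{[j]}_{r_j}$ kills every non-monotone chain and every chain violating the congruence, so the surviving chains are exactly those in the sum above. Bond dimension $k+1$ suffices because every $\ell_j\le k$. Unlike Theorem~\ref{thm:commutative}, no $k!$ prefactor appears, since the Gaussian binomials already carry the full multiplicity.

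The main obstacle is bookkeeping rather than an idea. We must check that the index convention is consistent: the binomial $\binom{\ell_j}{k_j}_{q_j}$ in~\eqref{eq:factorization} has to match $\binom{\ell_j}{\ell_j-\ell_{j-1}}_{q_j}$ in~\eqref{eq:twisted_matrix}, which holds since $k_j=\ell_j-\ell_{j-1}$. We must also handle the edge cases correctly, namely $k_j=0$ (where the binomial equals $1$) and $j=1$ (where $q_1=1$ and $\binom{\ell_1}{\ell_1}=1$).
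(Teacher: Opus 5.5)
Your proposal is correct and follows essentially the same route as the paper: substitute the factorization \eqref{eq:factorization} into \eqref{eq:alpha_twisted}, re-index by the partial sums $\ell_j$, and read the resulting nearest-neighbour product off as a matrix product, with the zero pattern and the boundary vectors $v_0$, $v_m$ enforcing the constraints. Your extra bookkeeping spells out steps the paper's one-paragraph proof leaves implicit: the composition/chain bijection, the edge cases $k_j=0$ and $j=1$, and the absence of a $k!$ prefactor.
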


\begin{proof}
  Substituting \eqref{eq:factorization} into \eqref{eq:alpha_twisted} and
  re-indexing by $\ell_j$:
  \begin{align}
    \alpha_r
    =
    \sum_{\substack{0=\ell_0\le\cdots\le\ell_m=k\\
        \ell_j-\ell_{j-1}\equiv r_j\pmod a}}
    \prod_{j=1}^m
    c_j^{\ell_j-\ell_{j-1}}
    \binom{\ell_j}{\ell_j-\ell_{j-1}}_{q_j}.
  \end{align}
  Each summand is a product of nearest-neighbour factors; encoding endpoint
  constraints via $v_0$, $v_m$ yields \eqref{eq:mps_twisted}
  with bond dimension $k+1$.
\end{proof}

\begin{remark}[Sparsity and computational cost]
  \label{rem:cost}
  A single amplitude $\alpha_r$ is computed by the sweep
  $v_j = v_{j-1}\widetilde{A}^{[j]}_{r_j}$ at cost $O(mk^2)$.  For
  general $q_j$, the Gaussian binomial entries can be precomputed via the
  $q$-Pascal recurrence~\eqref{eq:q_pascal} in $O(k^2)$ per site and
  $O(mk^2)$ total.

  The selection rule $k_j \equiv r_j \pmod{a}$ means that each matrix
  $\widetilde{A}^{[j]}_{r_j}$ has approximately a fraction $1/a$ of its
  entries nonzero; in particular, larger generator orders produce sparser
  MPS matrices.  In practice, one does not need to enumerate all $a^m$
  amplitudes: the MPS representation~\eqref{eq:mps_twisted} with bond
  dimension $k+1$ can be used directly for efficient pilot state
  preparation, since known methods can prepare a quantum state described
  by an MPS of bond dimension $D$ in $\mathrm{poly}(m,D)$ gates
  (see, e.g.,~\cite{Schollwoeck2011}).
\end{remark}

\begin{proposition}[Extension to general polynomials]
  \label{rem:direct_sum}
  Let $\calP(x) = \sum_{j=0}^d a_j x^j$ be an arbitrary polynomial of
  degree at most~$d$, and let $\widetilde{A}^{[j']}_{r_{j'}}$ denote the
  site matrices of Theorem~\ref{thm:twisted_mps} built at bond
  dimension $d+1$ (i.e.\ with~$k$ in \eqref{eq:twisted_matrix} replaced
  by~$d$).  Then the pilot state amplitudes
  $\alpha_r(\calP) = \sum_{j=0}^d a_j\,\alpha_r^{(j)}$ admit an exact
  MPS representation of bond dimension $d+1$ using these same site
  matrices, the same left boundary $v_0 = (1,0,\ldots,0)\in\CC^{d+1}$,
  and the $\calP$-dependent right boundary vector
  \begin{align}
    v_m^{(\calP)} \;=\; (a_0,\,a_1,\,\ldots,\,a_d)\;\in\;\CC^{d+1}.
  \end{align}
  Explicitly,
  \begin{align}
    \alpha_r(\calP)
    \;=\;
    v_0\,\Bigl(\prod_{j'=1}^m \widetilde{A}^{[j']}_{r_{j'}}\Bigr)\,
    \bigl(v_m^{(\calP)}\bigr)^{\!\top}.
    \label{eq:poly_mps}
  \end{align}
  This improves on the standard direct-sum construction for linear
  combinations of MPS (block-diagonal site matrices with appropriate
  boundary vectors; see, e.g.,
  Schollw\"ock~\cite{Schollwoeck2011}), which applied to the sum of
  monomial pilot states of degrees $0,1,\ldots,d$ would yield bond
  dimension $\sum_{j=0}^d(j{+}1)=\binom{d+2}{2}$.
\end{proposition}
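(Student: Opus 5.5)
Write $M_r := \prod_{j'=1}^m \widetilde{A}^{[j']}_{r_{j'}}$ for the product of the site matrices built at bond dimension $d+1$, with indices $0,\ldots,d$. The strategy is to show that the single row vector $v_0 M_r$ already contains every monomial amplitude: its entry in column $\ell$ should be exactly $\alpha_r^{(\ell)}$ for each $0\le \ell\le d$. Pairing that row vector with $(a_0,\ldots,a_d)$ then gives $\sum_\ell a_\ell\,\alpha_r^{(\ell)} = \alpha_r(\calP)$ by linearity, which is \eqref{eq:poly_mps}.

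\emph{Step 1: expand the product.} Since $v_0 = (1,0,\ldots,0)$, the entry in column $\ell$ is
\begin{align}
  (v_0 M_r)_\ell
  = \sum_{0=\ell_0\le \ell_1\le\cdots\le\ell_m=\ell}\;
  \prod_{j=1}^m \bigl(\widetilde{A}^{[j]}_{r_j}\bigr)_{\ell_{j-1},\ell_j}.
\end{align}
The entries vanish unless $\ell_j \ge \ell_{j-1}$, so each path is monotone. Any path ending at $\ell\le d$ has all intermediate $\ell_j\le d$, so the truncation to indices $0,\ldots,d$ cuts off no contributing path.

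\emph{Step 2: show the entries do not depend on the bond dimension.} The entry $c_j^{\ell_j-\ell_{j-1}}\binom{\ell_j}{\ell_j-\ell_{j-1}}_{q_j}$ and its selection rule depend only on $(\ell_{j-1},\ell_j)$, not on $k$. Hence the sum in Step 1 is exactly the path sum in the proof of Theorem~\ref{thm:twisted_mps} with $k$ replaced by $\ell$. Setting $k_j=\ell_j-\ell_{j-1}$ and applying Proposition~\ref{prop:factorization} together with \eqref{eq:alpha_twisted} identifies it with $\alpha_r^{(\ell)}$.

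\emph{Step 3: the case $\ell=0$.} Here the only path is the all-zero one. Its weight is $1$ if every $r_j=0$ and $0$ otherwise, because the selection rule forces $0\equiv r_j \pmod a$. This matches $h^0=1=z_1^0\cdots z_m^0$, and it is consistent with \eqref{eq:alpha_twisted}, which was stated only for $k\ge1$.

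The main thing to check carefully is Step 2, namely that the matrices built for bond dimension $d+1$ are the leading principal blocks of those for any larger bond dimension. This follows because the entries are defined uniformly in the indices. The remaining steps are immediate linearity. Finally, the comparison with the direct-sum bound $\binom{d+2}{2}$ is just arithmetic: block-diagonal site matrices for the monomials of degrees $0,\ldots,d$ have total size $\sum_{j=0}^d(j+1)$.
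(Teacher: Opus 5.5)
Your proposal is correct and follows essentially the same route as the paper: you identify $v_0 M_r e_\ell^{\top}$ with the degree-$\ell$ monomial amplitude. This uses the fact that the site-matrix entries do not depend on the bond dimension, together with the monotonicity of bond indices. You then conclude by linearity in the right boundary vector. Your explicit check of the $\ell=0$ case is a small but welcome addition, since the paper's appeal to Theorem~\ref{thm:twisted_mps} glosses over it and Theorem~\ref{thm:twisted_multinomial} is only stated for $k\ge 1$.
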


\begin{proof}
  The matrix entry
  $\bigl(\widetilde{A}^{[j']}_{r_{j'}}\bigr)_{\ell_{j'-1},\ell_{j'}}$
  in \eqref{eq:twisted_matrix} is
  $c_{j'}^{\ell_{j'}-\ell_{j'-1}}\,
   \binom{\ell_{j'}}{\ell_{j'}-\ell_{j'-1}}_{q_{j'}}$ whenever
  $\ell_{j'}\ge\ell_{j'-1}$ and $\ell_{j'}-\ell_{j'-1}\equiv r_{j'}\pmod a$,
  and zero otherwise.  In particular, it depends only on the bond
  indices $\ell_{j'-1},\ell_{j'}\in\{0,1,\ldots,d\}$ and on the physical
  index $r_{j'}$; it is \emph{independent} of the overall polynomial
  degree chosen in Theorem~\ref{thm:twisted_mps}.

  For each $j_0 \in \{0,1,\ldots,d\}$, apply
  Theorem~\ref{thm:twisted_mps} with degree $k = j_0$, embedded in
  $\CC^{d+1}$ by padding with zeros at bond levels $\ell > j_0$.  Since
  the bond indices can only increase through the MPS contraction and
  the left boundary is $v_0=e_0$, any contraction that reaches bond
  level $\ell_m > j_0$ does not contribute to the $j_0$-th monomial
  amplitude, while the contraction that ends at $\ell_m = j_0$ computes
  exactly $\alpha_r^{(j_0)}$.  Hence
  \begin{align}
    \alpha_r^{(j_0)}
    \;=\;
    v_0\,\Bigl(\prod_{j'=1}^m \widetilde{A}^{[j']}_{r_{j'}}\Bigr)\,
    e_{j_0}^{\!\top},
  \end{align}
  where $e_{j_0}\in\CC^{d+1}$ is the $(j_0{+}1)$-th standard basis
  vector.  By linearity of $\alpha_r$ in the polynomial~$\calP$,
  \begin{align}
    \alpha_r(\calP)
    \;=\;
    \sum_{j_0=0}^d a_{j_0}\,\alpha_r^{(j_0)}
    \;=\;
    v_0\,\Bigl(\prod_{j'=1}^m \widetilde{A}^{[j']}_{r_{j'}}\Bigr)\,
    \Bigl(\sum_{j_0=0}^d a_{j_0}\,e_{j_0}\Bigr)^{\!\top}
    \;=\;
    v_0\,\Bigl(\prod_{j'=1}^m \widetilde{A}^{[j']}_{r_{j'}}\Bigr)\,
    \bigl(v_m^{(\calP)}\bigr)^{\!\top}.
  \end{align}
  The bond-dimension count of the site matrices is $(d{+}1)\times(d{+}1)$
  by construction.
\end{proof}

\subsection{Specialization to Pauli Hamiltonians ($a=2$)}
\label{subsec:pauli_specialization}

For Pauli Hamiltonians on qubits, $a=2$ and
$\omega_{ij} = (-1)^{\lambda_{ij}} \in \{+1,-1\}$.
The predecessor-uniform parameters satisfy $q_j \in \{+1,-1\}$.
The physical indices are $r_j \in \{0,1\}$, giving a $2^m$-dimensional
pilot state.  Only two Gaussian binomials arise.

\textit{Case $q_j = +1$ (commuting generator).}  This reduces to the
ordinary binomial coefficient: $\binom{n}{r}_1 = \binom{n}{r}$.

\textit{Case $q_j = -1$ (anticommuting generator).}  The Gaussian binomial
at $q=-1$ admits the well-known closed form
\begin{align}
  \binom{n}{r}_{-1}
  =
  \begin{cases}
    \displaystyle\binom{\lfloor n/2\rfloor}{\lfloor r/2\rfloor}
      & \text{if } n \text{ is odd or } r \text{ is even}, \\[6pt]
    0 & \text{if } n \text{ is even and } r \text{ is odd}.
  \end{cases}
  \label{eq:gauss_binom_minus1}
\end{align}
The vanishing pattern --- $\binom{n}{r}_{-1} = 0$ whenever $n$ is even and
$r$ is odd --- introduces sparsity in the MPS matrices
$\widetilde{A}^{[j]}_{r_j}$ for anticommuting generators, which can be
exploited to reduce the cost of the MPS contraction.

Both cases allow $O(1)$ evaluation per entry (using precomputed ordinary
binomials), so all MPS matrix entries can be filled in $O(mk^2)$ total
time --- the same asymptotic cost as a single MPS sweep.

\begin{remark}[Predecessor-uniformity for Paulis]
  \label{rem:pauli_predecessor_uniform}
  In the Pauli case ($a=2$), predecessor-uniformity means that for each
  $j$, the generator $P_j$ either commutes with all of
  $P_1,\ldots,P_{j-1}$ ($q_j = +1$) or anticommutes with all of
  them ($q_j = -1$).  For example, a set of mutually anticommuting
  Paulis is predecessor-uniform under any ordering (with $q_j = -1$
  for all $j \ge 2$).
\end{remark}

\subsection{Specialization to Qudit Hamiltonians ($a=d$, $d$ prime)}
\label{subsec:qudit_specialization}

For qudit systems of prime local dimension~$d$, the generalized Pauli
(Heisenberg--Weyl) operators satisfy $Z_j^d = 1$, with commutation phases
$\omega_{ij} \in \mu_d = \{e^{2\pi ik/d}: k=0,\ldots,d-1\}$.  The
primality of~$d$ is essential for the single-order framework: the clock
and shift operators $X$ and $Z$ satisfy $X^d = Z^d = I$, so every
generalized Pauli $P = \omega^c X^a Z^b$ has order dividing~$d$.  When
$d$ is prime, the only divisors are $1$ and $d$, so every non-identity
generalized Pauli has order exactly~$d$ and the single-order condition
$a = d$ holds uniformly.  For composite~$d$, different generators may
have different orders (e.g.\ for $d=6$, the operator $X^3$ has order~$2$),
and the single-order assumption would not apply without further
restrictions on the generator set.

The generator order is thus $a = d$, the physical indices take values
$r_j \in \{0,1,\ldots,d-1\}$, and the pilot state has $d^m$ amplitudes.

Unlike the Pauli case ($d=2$), the Gaussian binomials at $q = e^{2\pi ik/d}$
do not generally admit simple closed forms.  However, the $q$-Pascal
recurrence~\eqref{eq:q_pascal} computes each $\binom{n}{r}_q$ in $O(nr)$
operations regardless of the value of~$q$, including at roots of unity.
The MPS bond dimension remains $k+1$ and the contraction cost is still
$O(mk^2)$ per amplitude.

\section{Comparison with the Connected-Component Approach}
\label{sec:comparison}

We briefly compare the MPS construction of Theorem~\ref{thm:twisted_mps}
with the approach based on connected components of the anticommutation
graph, which is the method used in~\cite{Schmidhuber2025,Bu2026}.

\begin{definition}[Anticommutation graph]
  \label{def:anticomm_graph}
  The \emph{anticommutation graph} $G=(V,E)$ of a set of Pauli generators
  $\{P_1,\ldots,P_m\}$ has vertex set $V=\{1,\ldots,m\}$ and an edge
  $(i,j)\in E$ whenever $P_iP_j = -P_jP_i$.
\end{definition}

Suppose $G$ decomposes into connected components
$C_1,\ldots,C_p$.  In the approach of~\cite{Schmidhuber2025,Bu2026},
the computation of the pilot state amplitudes factorizes over
components, and the cost is dominated by the largest component.  Let
$c_{\max} = \max_\ell |C_\ell|$ denote the size of the largest connected
component.  The overall cost of the connected-component method
scales exponentially in $c_{\max}$.

By contrast, the MPS of Theorem~\ref{thm:twisted_mps} has bond dimension
$k+1$ and computes each amplitude in $O(mk^2)$ time, \emph{irrespective}
of the structure of the anticommutation graph, provided the twisting
matrix admits a predecessor-uniform ordering.  The following example shows
that predecessor-uniform twistings can have $c_{\max} = m$, giving an
exponential separation in the cost of computing pilot state amplitudes.

\begin{proposition}[Mutually anticommuting generators]
  \label{prop:anticommuting}
  Let $\{P_1,\ldots,P_m\}$ be a set of mutually anticommuting $n$-qubit
  Pauli operators (such a set exists with $m = 2n+1$).  Then:
  \begin{enumerate}
    \item[\emph{(i)}] The twisting matrix $\Omega$ with
      $\omega_{ij}=-1$ for all $i \ne j$ is predecessor-uniform under
      every ordering (with $q_j=-1$ for all $j\ge 2$).
    \item[\emph{(ii)}] The anticommutation graph is the complete graph
      $K_m$, which is a single connected component with $c_{\max} = m$.
    \item[\emph{(iii)}] The connected-component approach
      of\/~\cite{Schmidhuber2025,Bu2026} has cost exponential in~$m$,
      while the MPS of Theorem~\ref{thm:twisted_mps} has bond dimension
      $k+1$ and cost $O(mk^2)$ per amplitude.
  \end{enumerate}
\end{proposition}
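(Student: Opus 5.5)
Part (i) follows from Definition~\ref{def:predecessor_uniform} with almost no work. If $\omega_{ij}=-1$ for all $i\ne j$, then under any permutation $\pi\in S_m$ the relabelled matrix still has every off-diagonal entry equal to $-1$. So for each $j\ge 2$ we can set $q_j=-1$: the entry $\omega_{ij}$ is constant over all predecessors $i<j$ for every ordering. This matches Remark~\ref{rem:pauli_predecessor_uniform}.

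For existence of such a set with $m=2n+1$, I would use the standard Jordan--Wigner construction. For $k=1,\ldots,n$ set
\[
\gamma_{2k-1}=Z^{\otimes(k-1)}\otimes X\otimes I^{\otimes(n-k)},\qquad
\gamma_{2k}=Z^{\otimes(k-1)}\otimes Y\otimes I^{\otimes(n-k)},
\]
and add $\gamma_{2n+1}=Z^{\otimes n}$. To check mutual anticommutation, I would use the symplectic form~\eqref{eq:symplectic_comm}. Two operators at the same site $k$ differ only by $X$ versus $Y$ there, and these anticommute. For operators at sites $k<k'$, the later one carries a $Z$ at site $k$, where the earlier one has $X$ or $Y$, which gives exactly one anticommuting position. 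The operator $Z^{\otimes n}$ anticommutes with each $\gamma_i$ (for $i\le 2n$) through the single $X$ or $Y$ factor of $\gamma_i$.

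For part (ii), Definition~\ref{def:anticomm_graph} puts an edge $(i,j)$ exactly when $P_iP_j=-P_jP_i$, which here holds for all $i\ne j$. The graph is therefore $K_m$, which is connected, so $c_{\max}=m$.

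Part (iii) combines two facts.
\begin{itemize}
\item For the MPS side, I would invoke Theorem~\ref{thm:twisted_mps} using part (i). It gives bond dimension $k+1$, and the sweep $v_j=v_{j-1}\widetilde{A}^{[j]}_{r_j}$ costs $O(mk^2)$ per amplitude (Remark~\ref{rem:cost}). Since $q_j=-1$, the entries are supplied by the closed form~\eqref{eq:gauss_binom_minus1}.
\item For the connected-component side, I would substitute $c_{\max}=m$ from (ii) into the cost scaling stated just before the proposition, which is exponential in $c_{\max}$. That yields cost exponential in $m$.
\end{itemize}

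The only delicate step is this last one. The exponential cost is a property of the method in~\cite{Schmidhuber2025,Bu2026} as summarized in the text, not something proven here. I would state it explicitly as taken from that description, so that (iii) is read as a comparison of the two methods' costs rather than as a lower bound on the problem itself.
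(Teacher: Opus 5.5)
Your proposal is correct and follows the paper's own argument. Part (i) is immediate from Definition~\ref{def:predecessor_uniform} because every off-diagonal entry is $-1$, part (ii) follows from Definition~\ref{def:anticomm_graph}, and part (iii) combines Theorem~\ref{thm:twisted_mps} at $q_j=-1$ with the exponential-in-$c_{\max}$ cost of the connected-component method quoted from the cited works; your site-by-site check of the Jordan--Wigner family fills in the ``one verifies'' the paper leaves after the proof, and your caveat that this exponential cost is imported rather than proved here is accurate.
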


\begin{proof}
  Part~\emph{(i)}: since $\omega_{ij}=-1$ for all $i < j$, for any
  ordering and any generator $z_j$, the commutation parameter with every
  predecessor is $-1$; this is constant, so predecessor-uniformity holds
  trivially.  Part~\emph{(ii)} is immediate: every pair of generators is
  connected by an edge.  Part~\emph{(iii)} follows from
  Theorem~\ref{thm:twisted_mps} with $q_j = -1$ for all $j\ge 2$.
\end{proof}

An explicit family of mutually anticommuting $n$-qubit Paulis of size
$m = 2n+1$ is given by the Jordan--Wigner construction: for
$k = 1,\ldots,n$, define
\begin{align}
  \gamma_{2k-1} = Z^{\otimes(k-1)}\otimes X \otimes I^{\otimes(n-k)},
  \qquad
  \gamma_{2k} = Z^{\otimes(k-1)}\otimes Y \otimes I^{\otimes(n-k)},
\end{align}
together with $\gamma_{2n+1} = Z^{\otimes n}$.  One verifies that
$\gamma_i\gamma_j = -\gamma_j\gamma_i$ for all $i\ne j$.  For example,
on $n=2$ qubits the five generators are $XI$, $YI$, $ZX$, $ZY$, $ZZ$;
on $n=3$ qubits there are seven: $XII$, $YII$, $ZXI$, $ZYI$, $ZZX$,
$ZZY$, $ZZZ$.

\begin{remark}[Compatibility with the symplectic structure]
  \label{rem:symplectic}
  One might wonder whether the symplectic origin of the Pauli commutation
  phases imposes additional constraints that make predecessor-uniform
  twistings harder to realize than the abstract algebra suggests.

  Predecessor-uniformity is a condition on the entries of the
  $m\times m$ matrix $\Lambda = (\lambda_{ij})$, where
  $\lambda_{ij} = u_i\cdot w_j + w_i\cdot u_j\pmod{2}$ is the
  symplectic form of $P_i = (u_i\mid w_i)$ and
  $P_j = (u_j\mid w_j)$ in $\mathbb{F}_2^{2n}$.  The symplectic form
  automatically ensures $\Lambda_{ii}=0$ and $\Lambda_{ij}=\Lambda_{ji}$,
  and one might ask whether further constraints are present --- for
  instance, whether the values $\lambda_{12}$, $\lambda_{23}$ could
  force $\lambda_{13}$.

  No such constraint exists: the $\binom{m}{2}$ upper-triangular entries
  of $\Lambda$ are freely specifiable.  Given any $m\times m$ symmetric
  $\mathbb{F}_2$ matrix with zero diagonal, the following
  construction on $n=m$ qubits realizes it.  Set
  $P_i = (u_i\mid w_i)\in\mathbb{F}_2^{2m}$ with $u_i = e_i$
  (the $i$-th standard basis vector) and
  \begin{align}
    w_i[j] =
    \begin{cases}
      \Lambda_{ij} & \text{if } j > i, \\
      0            & \text{if } j \le i.
    \end{cases}
    \label{eq:realization}
  \end{align}
  For $i<j$, the symplectic product is
  $\langle P_i, P_j \rangle_s
   = \underbrace{e_i \cdot w_j}_{=\,w_j[i]\,=\,0}
   + \underbrace{w_i \cdot e_j}_{=\,w_i[j]\,=\,\Lambda_{ij}}
   = \Lambda_{ij}$,
  where $w_j[i]=0$ because $i < j$ and $w_j$ is supported only on
  indices $>j$.

  In particular, the all-ones off-diagonal matrix
  ($\Lambda_{ij}=1$ for $i\ne j$) that encodes mutual anticommutation
  is realized by the Jordan--Wigner construction on
  $n = \lceil(m-1)/2\rceil$ qubits, confirming that
  Proposition~\ref{prop:anticommuting} applies to genuine Pauli
  Hamiltonians.
\end{remark}

\begin{remark}[Predecessor-uniformity and locality]
  \label{rem:locality}
  The most significant limitation of predecessor-uniformity from a
  physical perspective is its tension with locality.
  In the qubit case ($a=2$), if generator $P_j$ anticommutes
  with any predecessor $P_i$ ($i < j$), then it must anticommute with
  \emph{all} predecessors $P_1,\ldots,P_{j-1}$.  In particular, $P_j$
  must share support with every preceding generator.  This rules out
  geometrically local Hamiltonians, where each generator acts on a bounded
  number of qubits and therefore commutes with all generators whose
  support is disjoint.

  More precisely, for a Hamiltonian on a lattice with generators of
  bounded support, most pairs of generators have disjoint support and thus
  commute.  For any generator $P_j$ that anticommutes with even one
  predecessor, predecessor-uniformity would force $P_j$ to anticommute
  with all earlier generators --- including those with disjoint support,
  which is impossible.  The predecessor-uniform condition therefore selects
  Hamiltonians with highly non-local, overlapping generators.

  This does not render the method useless: mutually anticommuting
  generators (Proposition~\ref{prop:anticommuting}) are a natural class
  that arises, for instance, in the Jordan--Wigner encoding of fermionic
  systems.  However, the tension with locality highlights that the
  greatest potential for the MPS approach lies in identifying Hamiltonians
  that combine non-local generator structure with tractable decoding.
\end{remark}

\section{Conclusions}
\label{sec:conclusions}

We have introduced the twisted multinomial coefficient --- a
generalization of the classical $q$-multinomial in which inversions
carry pair-dependent weights --- and proved that under
predecessor-uniformity, it factorizes as a product of Gaussian binomials
with site-dependent parameters $q_j$
(Proposition~\ref{prop:factorization}).  This purely combinatorial
identity yields an exact MPS of bond dimension $k+1$ for the pilot
state amplitudes of $h^k$ in the HDQI framework
(Theorem~\ref{thm:twisted_mps}), and more generally bond dimension
$\deg(\calP)+1$ for the amplitudes of $\calP(h)$ for any polynomial
$\calP$ (Proposition~\ref{rem:direct_sum}).  These results demonstrate
an exponential improvement over the connected-component method for
Hamiltonians built from mutually anticommuting generators
(Proposition~\ref{prop:anticommuting}).

A central theme of this work is that pilot state preparation is only one
component of the HDQI pipeline: efficient decoding of the associated
Hamiltonian code is equally essential, and the two must be considered in
conjunction.  The predecessor-uniform condition solves the pilot state
problem, but its tension with generator locality
(Remark~\ref{rem:locality}) constrains the class of applicable
Hamiltonians.

To illustrate the subtlety of combining pilot state preparation with
decoding, consider the perturbed toric code
$H = -\sum_v A_v - \sum_p B_p + \epsilon\sum_j Z_j$ on an
$L \times L$ torus.  If the plaquette operators $B_p$ and the
single-qubit $Z_j$ are all included as generators, then $B_p$ is the
product of four generators $Z_{j_1}Z_{j_2}Z_{j_3}Z_{j_4}$, yielding a
weight-5 relation $B_p \cdot Z_{j_1}\cdot Z_{j_2}\cdot Z_{j_3}\cdot
Z_{j_4} = I$ in the generator group.  Consequently, the Hamiltonian
code has distance at most~$5$, independent of the lattice size~$L$.
Even if one could efficiently prepare the pilot state for this
Hamiltonian, the small code distance would preclude reliable decoding.
Dropping $B_p$ from the generator set restores the code distance but
makes the Hamiltonian non-linear in the remaining generators, taking it
outside the current framework.  This example illustrates why identifying
Hamiltonians that simultaneously admit efficient pilot state preparation
\emph{and} sufficient code distance remains a challenging open problem.

On the combinatorial side, it would be interesting to explore the
computational complexity of evaluating or approximating the twisted
multinomial coefficient for general twisting matrices~$\Omega$, and to
understand more precisely the role that predecessor-uniformity plays as a
boundary between tractable and potentially intractable instances.

We hope that the combinatorial perspective developed here --- in
particular, the identification of the twisted multinomial as an object
of independent interest --- will shed further light on what makes pilot
state preparation tractable or hard in general, and will guide the search
for Hamiltonians where the full HDQI pipeline can be realized efficiently.

\medskip
\textbf{Acknowledgements.}
The author thanks Yihui Quek for bringing~\cite{Bu2026} to our attention
and for stimulating discussions about Hamiltonian DQI,
and Christophe Piveteau for pointing out the tension between
predecessor-uniformity and generator locality (Remark~\ref{rem:locality}).

\appendix
\section{Induction Proof of the Twisted Multinomial Theorem}
\label{app:induction}

We give an alternative proof of Theorem~\ref{thm:twisted_multinomial}
by induction on $k$.  This proof is self-contained: it verifies the
identity directly from the shuffle definition
(Definition~\ref{def:twisted_multinomial}) and the algebra relations,
without appealing to confluence.

\begin{proof}[Proof of Theorem~\ref{thm:twisted_multinomial} by induction]
  \textit{Base case ($k=1$).}  We have $h = \sum_{j=1}^m c_j z_j$.  The
  only compositions of $1$ into $m$ parts have $k_j = 1$ for a single
  index~$j$ and $k_i = 0$ for $i \ne j$.  Each such shuffle
  consists of a single letter, so $\Sh(0,\ldots,0,1,0,\ldots,0)$
  contains exactly one element with no inversions.  Hence
  $\binom{1}{0,\ldots,0,1,0,\ldots,0}_\Omega = 1$, and the right-hand
  side of~\eqref{eq:twisted_expansion} equals $\sum_j c_j z_j = h$.

  \textit{Inductive step.}  Assume \eqref{eq:twisted_expansion} holds
  for $k-1$.  Then
  \begin{align}
    h^k = h \cdot h^{k-1}
    &= \biggl(\sum_{i=1}^m c_i z_i\biggr)
    \sum_{\substack{k'_1+\cdots+k'_m=k-1}}
    \binom{k{-}1}{k'_1,\dots,k'_m}_\Omega
    \prod_{j=1}^m c_j^{k'_j}\,z_1^{k'_1}\cdots z_m^{k'_m}.
    \label{eq:induction_step}
  \end{align}
  Consider a single product $z_i \cdot z_1^{k'_1}\cdots z_m^{k'_m}$
  from this sum.  To bring $z_i$ to its canonical position, it must move
  past $z_1^{k'_1}\cdots z_{i-1}^{k'_{i-1}}$ to the right.  Each copy
  of $z_j$ with $j < i$ produces a factor $\omega_{j,i}$ (from
  $z_i z_j = \omega_{j,i}\,z_j z_i$), so the total phase is
  $\prod_{j<i}\omega_{j,i}^{k'_j}$ and the product becomes
  \begin{align}
    z_i \cdot z_1^{k'_1}\cdots z_m^{k'_m}
    =
    \biggl(\prod_{j<i}\omega_{j,i}^{k'_j}\biggr)
    z_1^{k'_1}\cdots z_{i-1}^{k'_{i-1}}\,z_i^{k'_i+1}\,
    z_{i+1}^{k'_{i+1}}\cdots z_m^{k'_m}.
  \end{align}
  Setting $k_j = k'_j$ for $j \ne i$ and $k_i = k'_i + 1$ and collecting
  all contributions to a given composition $(k_1,\ldots,k_m)$ of $k$, the
  coefficient of $\prod_j c_j^{k_j}\,z_1^{k_1}\cdots z_m^{k_m}$ becomes
  \begin{align}
    \sum_{\substack{i=1\\k_i \ge 1}}^m
    \binom{k{-}1}{k_1,\ldots,k_i{-}1,\ldots,k_m}_\Omega
    \prod_{j<i}\omega_{j,i}^{k_j}.
    \label{eq:recurrence}
  \end{align}
  It remains to verify that~\eqref{eq:recurrence} equals
  $\binom{k}{k_1,\ldots,k_m}_\Omega$.  Partition the shuffles in
  $\Sh(k_1,\ldots,k_m)$ according to the letter $i = \sigma(1)$
  occupying the first position.  Removing position~$1$ yields a shuffle
  of $(k_1,\ldots,k_i{-}1,\ldots,k_m)$ with the same internal
  inversions and hence the same internal phase, which contributes
  $\binom{k{-}1}{k_1,\ldots,k_i{-}1,\ldots,k_m}_\Omega$.  The inversions
  involving position~$1$ are the pairs $(1,s)$ with $\sigma(s) < i$;
  in the remaining $k-1$ positions, letter~$j$ with $j < i$ appears
  $k_j$ times, each contributing a weight $\omega_{j,i}$.  The total
  phase from these inversions is $\prod_{j<i}\omega_{j,i}^{k_j}$.
  Summing over all choices of~$i$ recovers~\eqref{eq:recurrence},
  confirming the identity.
\end{proof}


\end{document}